\documentclass[11pt]{article}
\usepackage{amssymb}
\usepackage{amsmath}
\usepackage{amsthm}
\usepackage{mathrsfs}  
\usepackage{verbatim}
\usepackage{setspace}
\usepackage{color}   

\usepackage[dvipsnames]{xcolor}

\usepackage[authoryear]{natbib}

\usepackage{hyperref}
\usepackage{graphicx}
\usepackage{tabularx}
\usepackage{booktabs}
\usepackage[normalem]{ulem}

\usepackage[margin=1in]{geometry}
\usepackage{multicol}
\usepackage{multirow}
\usepackage{bbm}
\hypersetup{
	colorlinks=true,
	linktoc=all,     
	linkcolor=blue, 
}

\usepackage[margin=1cm]{caption}
\captionsetup{font=scriptsize}

\newcommand{\N}{{\mathbb N}}

\newcommand{\bE}{{\mathbb E}}
\newcommand{\FF}{{\mathcal F}}

\newcommand\ta{{\tilde{a}}}
\newcommand\ha{{\hat{a}}}
\newcommand\hb{{\hat{b}}}
\newcommand\tv{{\tilde{v}}}

\allowdisplaybreaks

\newtheoremstyle{Definition}{\topsep}{\topsep}{\itshape}{0pt}{\itshape}{}{}{}
\newtheorem{definition}{Definition}[section]

\newtheorem{theorem}{Theorem}[section]
\numberwithin{figure}{section}
\numberwithin{equation}{section}

\title{Uniqueness and Existence of Linear Equilibrium with a Constrained Trader}
\author{Heeyoung Kwon\footnote{Rutgers Univeristy, Department of Mathematics, Piscataway, NJ 08854 USA, Email: hk1001@math.rutgers.edu} \,\, and Jin Hyuk Choi\footnote{Ulsan National Institute of Science and Technology, Department of Mathematical Sciences, Email: jchoi@unist.ac.kr}}
\date{\today}

\begin{document}
	\maketitle
	
	\begin{abstract}
We study a discrete-time financial market with a single constrained trader, competitive market makers, and noise traders. Within the class of linear equilibria, the equilibrium structure is shown to be uniquely determined by two state variables: the market maker's expectation of the trader's remaining demand and the residual demand beyond this expectation. This discrete-time uniqueness result aligns with its continuous-time analogue, indicating that the latter may emerge as the unique limit within the same class. We also prove the existence of a linear equilibrium, providing formal support to numerical and empirical findings in related work.
\end{abstract}
	
	\emph{Keywords: Constrained trading, Kyle model, Market microstructure, Structural uniqueness}  
	
	\emph{JEL Codes: G14, C62, D82}

\section{Introduction}

\citet{Kyle} is a cornerstone of market microstructure theory, providing a tractable dynamic model in which an informed trader strategically interacts with competitive market makers. Its continuous-time formulation in \citet{Back} further clarified the link between information flow and price formation. However, the original Kyle model does not capture certain empirically documented features,\footnote{See, for example, \citet{Jain, Hasbrouck, Barardehi}.} such as a time-varying price impact and U-shaped intraday trading patterns. Subsequent extensions have generated these features through various modifications.\footnote{See, for example, \citet{BackBaruch, Caldentey, Collin, CetinDanilova, Cetin}.} Among these, \citet{Choi_JFE} and \citet{Choi} show that introducing a trader with a random terminal trading target produces both time variation in price impact and U-shaped execution profiles, thereby aligning the model's predictions more closely with empirical evidence.

We study a discrete-time Kyle-type model in which a {\it constrained trader} faces a random terminal trading target. This setting is a simplified version of the discrete-time model in \citet{Choi_JFE}, where both an informed trader and a trader with a terminal target are present. By focusing on the latter alone, we isolate the strategic effects of the trading constraint while preserving the core price formation dynamics of the original framework. The model is also the discrete-time counterpart of \citet{Choi}, which studies a continuous-time market with the same type of constrained trader.

Our first main result is a structural uniqueness theorem: within the entire class of linear equilibria, the equilibrium can depend only on two state variables, namely the market maker's expectation of the trader's remaining demand and the residual amount the trader must execute beyond this expectation. This result rules out alternative linear structures that could generate different price dynamics or trading strategies. An important implication is that the continuous-time equilibrium in \citet{Choi} is not merely one possible specification, but the only linear structure consistent with equilibrium in the discrete-time analogue. This strengthens the credibility of the qualitative properties documented in \citet{Choi}, as they hold across the entire linear class rather than being artifacts of a particular specification.

Our second contribution is a proof of equilibrium existence. In the richer framework of \citet{Choi_JFE}, the interaction between two strategic traders made such a proof intractable. By working with a simpler but still economically meaningful setting, we close this gap, thereby placing the empirical implications in \citet{Choi_JFE} on a firmer theoretical basis. The proof is nontrivial because the equilibrium conditions form a two-dimensional boundary value problem, which would normally be difficult to solve using standard shooting methods. The key step is to identify a scaling property that reduces the effective dimensionality, making the construction feasible.

\section{Model} 

We consider a discrete-time market with $N\in\N$ trading dates. There are three types of market participants: a constrained trader, noise traders, and competitive market makers.

The constrained trader has a terminal trading target at time $N$ given by a random variable $\ta\sim N(0,\sigma_a^2)$. The asset's fundamental value is $\tv \sim N(0,\sigma_v^2)$ and is publicly revealed after the final date. The constrained trader observes $\ta$ at the beginning but cannot directly observe $\tv$. The pair $(\ta,\tv)$ is jointly normal with correlation coefficient $\rho \in (0,1]$, so the constrained trader has partial information about $\tv$ through the correlation.  
	
The noise traders submit exogenous net orders $\Delta w_n := w_n-w_{n-1}$, where $\{w_n\}_{n=1}^N$ is a discrete-time Brownian motion with independent increments. Each increment $\Delta w_n$ is normally distributed with mean zero and variance $\sigma_w^2$, independent of $(\tv, \ta)$. 

At each time $n$, the market makers observe the aggregate order flow
\begin{align}
y_n := \Delta \theta_n + \Delta w_n,\ n=1,\cdots,N,
\end{align}
where $\Delta \theta_n$ is the constrained trader's order at time $n$. The market maker's information filtration is
\begin{align*}
\mathcal{F}_n^M := \sigma\left(y_1,\cdots,y_n\right),
\end{align*}
and the constrained trader's information filtration is
\begin{align*}
\mathcal{F}_n^I := \sigma\left(\ta, y_1,\cdots,y_{n-1}\right).
\end{align*}
The market makers are competitive and risk neutral, and set the price according to
	\begin{align}\label{P_n}
		p_n = \mathbb{E}\left[\tilde{v} | \mathcal{F}_n^M\right],\quad  n=1,\cdots,N.
	\end{align}
The constrained trader starts from $\theta_0=0$ and must satisfy $\theta_N=\ta$. The expected profit is
\begin{align}
	\bE \Big[ \sum_{n=1}^{N} (\tv-p_n) \Delta \theta_n  \Big | \FF_1^I \Big]
	&= \bE \Big[\tv \,\theta_N - p_N \theta_N +  \sum_{n=1}^{N} \theta_{n-1}\Delta p_n  \Big| \FF_1^I \Big] \nonumber\\
	&=\frac{\rho \sigma_v}{\sigma_a} \ta^2 - \bE\Big[ \sum_{n=1}^{N} (\ta - \theta_{n-1}) \Delta p_n \Big| \FF_1^I \Big] \label{expected_profit},
\end{align}
where the second equality uses $\theta_N=\ta$. Hence the constrained trader's optimization problem is
\begin{align}
	\min_{\theta_n \in \FF_n^I,\,\, \theta_N=\ta} \bE\Big[ \sum_{n=1}^{N} (\ta - \theta_{n-1}) \Delta p_n \Big| \FF_1^I \Big]. \label{insider_object}
\end{align}

\begin{definition}\label{def}
An equilibrium is a pair $\{\theta_n,p_n\}_{n=1}^N$ satisfying the following conditions:
		
\noindent(i) For a given pricing rule $\{p_n\}$, the strategy $\{\theta_n\}$ solves the constrained trader's problem \eqref{insider_object}.
		
\noindent(ii) For a given strategy $\{\theta_n\}$, the price process $\{p_n\}$ satisfies the market efficiency condition \eqref{P_n}.
\end{definition}
	
Following \citet{Kyle}, we restrict attention to {\it linear equilibria} in which
\begin{align} \label{general_structure}
\begin{split}
\Delta p_n = \lambda_n y_n + h_{n-1},
\end{split}
\end{align}
where $\lambda_n$ is constant and $h_{n-1}$ is linear in $y_1,\cdots, y_{n-1}$. 	
Within this class of linear equilibria, we show in the next section that the equilibrium is structurally unique: it depends only on two state variables. The first is the market maker's expectation of the constrained trader's remaining trading demand, denoted $q_n$. The second is the residual demand beyond this expectation, given by $\ta - \theta_n - q_n$.

	\section{Analysis}
In this section, we establish the uniqueness of the equilibrium structure within the class of linear equilibria and prove the existence of the equilibrium. 

\subsection{Structural Uniqueness of Linear Equilibrium}

In Theorem~\ref{thm_unique}, we first shows that within the class of linear equilibria, the only possible equilibrium structure can be expressed in terms of the following state variables:
\begin{align}
q_n:=\bE[\ta-\theta_n | \FF_n^M] \quad \textrm{and} \quad \ta-\theta_n-q_n. \label{q_n}
\end{align} 
We interpret $q_n$ as the market maker's expectation of the constrained trader's remaining trading demand, and $\ta-\theta_n-q_n$ as the residual amount the constrained trader needs to trade beyond this expectation. We show that these two variables serves as the only sufficient statistics that characterize the equilibrium.	
	
	\begin{theorem}[Uniqueness of the structure]\label{thm_unique}
	Suppose that $\{\theta_n,p_n\}_{n=1}^N$ is a linear equilibrium of the form given in \eqref{general_structure}. Then the equilibrium must take the following form: there exist constants $\{\lambda_n, r_n, \beta_n, \alpha_n\}_{n=1}^N$ such that
\begin{align}
&\Delta p_n = \lambda_n y_n -\lambda_n\alpha_n q_{n-1},\label{P}\\
&\Delta q_n = r_n y_n -(1+r_n)\alpha_n q_{n-1},\label{Q}\\
&\Delta \theta_n = \beta_n \left(\tilde{a}-\theta_{n-1}-q_{n-1}\right) + \alpha_n q_{n-1}, \label{theta_n}
\end{align}
where $q_n:=\bE[\ta-\theta_n | \FF_n^M]$.
\end{theorem}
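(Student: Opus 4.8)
The plan is to decouple the two equilibrium requirements in Definition~\ref{def}---market efficiency \eqref{P_n} and optimality \eqref{insider_object}---and to show that together they force the stated form. I would begin with the pricing side. Since $p_n=\bE[\tv\mid\FF_n^M]$ is an $\FF^M$-martingale, applying $\bE[\cdot\mid\FF_{n-1}^M]$ to \eqref{general_structure} and using $\bE[\Delta w_n\mid\FF_{n-1}^M]=0$ gives $h_{n-1}=-\lambda_n\bE[\Delta\theta_n\mid\FF_{n-1}^M]$, so that $\Delta p_n=\lambda_n\big(y_n-\bE[\Delta\theta_n\mid\FF_{n-1}^M]\big)$. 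Because $(\ta,\tv)$ is jointly Gaussian and the order flow reveals $\tv$ only through $\ta$, the projection $\tv=\tfrac{\rho\sigma_v}{\sigma_a}\ta+\varepsilon$ with $\varepsilon$ independent of $\FF_n^M$ yields $p_n=\tfrac{\rho\sigma_v}{\sigma_a}\bE[\ta\mid\FF_n^M]=\tfrac{\rho\sigma_v}{\sigma_a}\ha_n$; in particular the whole price path is a deterministic multiple of the market maker's posterior mean $\ha_n$ of $\ta$. Reducing \eqref{P} to the claim $\bE[\Delta\theta_n\mid\FF_{n-1}^M]=\alpha_nq_{n-1}$ for a constant $\alpha_n$ is then all that remains on the pricing side.

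Next I would settle the filter. With a linear strategy all variables are jointly Gaussian, so $q_n=\bE[\ta-\theta_n\mid\FF_n^M]$ obeys a one-step Kalman update driven by the innovation $y_n-\bE[y_n\mid\FF_{n-1}^M]$; since $\bE[y_n\mid\FF_{n-1}^M]=\bE[\Delta\theta_n\mid\FF_{n-1}^M]=:\mu_{n-1}$, this reads $q_n=(q_{n-1}-\mu_{n-1})+r_n(y_n-\mu_{n-1})$ with a deterministic gain $r_n$ (deterministic because conditional variances in a Gaussian system are non-random). Granting $\mu_{n-1}=\alpha_nq_{n-1}$, this collapses immediately to $\Delta q_n=r_ny_n-(1+r_n)\alpha_nq_{n-1}$, which is \eqref{Q}; note that the same constant $\alpha_n$ then appears in \eqref{P} and \eqref{Q}, so these two equations are tied together automatically rather than through an extra restriction.

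The core of the argument is the optimality condition \eqref{insider_object}, which I would handle by backward induction on a value function. Substituting $\Delta p_n=\lambda_n(y_n-\mu_{n-1})$ and using $\bE[\Delta w_n\mid\FF_n^I]=0$ reduces \eqref{insider_object} to minimizing $\bE\big[\sum_{n}\lambda_n(\ta-\theta_{n-1})(\Delta\theta_n-\mu_{n-1})\mid\FF_1^I\big]$, a linear--quadratic--Gaussian problem with terminal constraint $\theta_N=\ta$ (equivalently $q_N=0$ and $\ta-\theta_N-q_N=0$). The inductive hypothesis I would carry is that, from date $n$ onward, the continuation value depends on the trader's information only through the pair $\big(q_{n-1},\,\ta-\theta_{n-1}-q_{n-1}\big)$ from \eqref{q_n} and is a quadratic form in them. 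Under this hypothesis each one-period problem is a scalar quadratic in $\Delta\theta_n$ whose minimizer is linear in the two state variables, giving \eqref{theta_n} for some constants $\beta_n,\alpha_n$; and because $\bE[\ta-\theta_{n-1}-q_{n-1}\mid\FF_{n-1}^M]=0$, taking $\bE[\cdot\mid\FF_{n-1}^M]$ of \eqref{theta_n} confirms $\mu_{n-1}=\alpha_nq_{n-1}$, closing the loop with the first two steps.

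I expect the decisive difficulty to be the sufficient-statistic reduction embedded in this induction. A priori a linear strategy may load on the entire order-flow vector $(y_1,\dots,y_{n-1})$, and the real content of the theorem is that optimality forbids any dependence outside $\big(q_{n-1},\ta-\theta_{n-1}-q_{n-1}\big)$, thereby excluding richer linear structures. The structural fact that makes the reduction possible is that the price path is a function of the single belief process $\ha_n$---hence, in equilibrium, of the $q$-path---so the market maker's payoff-relevant state is one-dimensional; combined with the one-dimensional private residual, the quadratic continuation value cannot acquire cross terms with other components of the history. Turning this into a rigorous induction requires checking that the quadratic form is preserved by the backward recursion and that the gain $r_n$ and the prediction coefficient $\alpha_n$ are genuinely deterministic and history-independent, which is where the Gaussian conditional-variance computations and the coefficient bookkeeping concentrate. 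Once the induction closes, reading off the coefficients produces exactly \eqref{P}, \eqref{Q}, and \eqref{theta_n}.
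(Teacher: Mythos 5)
Your plan reproduces the paper's architecture in its first two moves and in its overall shape: market efficiency pins the intercept via $h_{n-1}=-\lambda_n\bE[\Delta\theta_n\mid\FF_{n-1}^M]$ (the paper's \eqref{h_dynamics}), Gaussian filtering gives deterministic gains, and the heart of the argument is a backward induction on a quadratic value function in $\big(q_{n-1},\,\ta-\theta_{n-1}-q_{n-1}\big)$. However, there is a genuine gap exactly at the step you yourself flag as decisive, and your way of closing it is circular. Under your inductive hypothesis, the one-period problem at date $n$ is \emph{not} a quadratic in $\Delta\theta_n$ whose minimizer is linear in the two state variables alone: the objective contains $(\ta-\theta_{n-1})\Delta p_n$ with $\Delta p_n=\lambda_n y_n+h_{n-1}$, and the belief increment likewise carries $h_{n-1}$ (the paper shows $\Delta q_n=r_n y_n+\tfrac{1+r_n}{\lambda_n}h_{n-1}$, with the \emph{same} $h_{n-1}$), where $h_{n-1}$ is a priori an arbitrary linear function of $y_1,\dots,y_{n-1}$. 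Consequently the first-order condition has the form $\Delta\theta_n=\beta_n(\ta-\theta_{n-1}-q_{n-1})+\eta_n q_{n-1}+\gamma_n h_{n-1}$ (the paper's \eqref{theta_n1}), with a generically nonzero loading $\gamma_n$ on $h_{n-1}$. Your step ``taking $\bE[\cdot\mid\FF_{n-1}^M]$ of \eqref{theta_n} confirms $\mu_{n-1}=\alpha_n q_{n-1}$'' therefore assumes what is to be proved: \eqref{theta_n} only holds once one already knows that $h_{n-1}$ is a multiple of $q_{n-1}$, which is precisely the sufficient-statistic reduction at issue.

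The paper closes this loop with one short algebraic argument that your proposal is missing: keep $h_{n-1}$ as an unknown in the first-order condition, then feed that condition back into $h_{n-1}=-\lambda_n\bE[\Delta\theta_n\mid\FF_{n-1}^M]$. Since $\bE[\ta-\theta_{n-1}-q_{n-1}\mid\FF_{n-1}^M]=0$, this becomes the fixed-point equation $h_{n-1}=-\lambda_n(\eta_n q_{n-1}+\gamma_n h_{n-1})$, whose unique solution $h_{n-1}=-\tfrac{\eta_n\lambda_n}{1+\gamma_n\lambda_n}\,q_{n-1}$ (the paper's \eqref{h_q}) is what forces the history dependence to collapse onto $q_{n-1}$; only then do \eqref{P}, \eqref{Q}, \eqref{theta_n} follow and the quadratic form propagate to date $n-1$. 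Your proposed substitute---the price path is a deterministic multiple of $\ha_n$, ``hence, in equilibrium, of the $q$-path,'' so the market maker's payoff-relevant state is one-dimensional and no cross terms can arise---cannot be made rigorous as stated. The trader's continuation value depends on the history through the intercepts $h_k$ of the pricing rule, not through any payoff of the market maker; and the assertion that prices are a function of the $q$-path is itself a consequence of the theorem (it requires \eqref{Q} together with $r_n\neq 0$), so it cannot serve as an input to the induction. The observation $p_n=\tfrac{\rho\sigma_v}{\sigma_a}\ha_n$ is correct but does not by itself exclude linear strategies and pricing rules that load on components of $(y_1,\dots,y_{n-1})$ orthogonal to $q_{n-1}$; ruling those out is exactly what the fixed-point step accomplishes.
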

\begin{proof} 
See Appendix.
\end{proof}

\citet{Choi} studies the continuous-time counterpart of our model and constructs a specific linear equilibrium of the following form:
\begin{align*}
dp_t &= \lambda(t) dy_t -\lambda(t) \alpha(t) q_t dt,\\
dq_t &= r(t) dy_t -(1+r(t))\alpha(t) q_t dt,\\
d\theta_t &= \big( \beta(t) (\ta - \theta_t - q_t) + \alpha(t) q_t\big) dt,
\end{align*} 
for deterministic functions $\lambda(t),r(t),\beta(t)$, and $\alpha(t)$. This specification is the natural continuous-time analogue of the discrete-time structure characterized in Theorem~\ref{thm_unique}. In \citet{Choi}, the emergence of the two state variables $q_t$ and $\ta - \theta_t - q_t$ is not derived from a general uniqueness result, but rather taken as part of the specific solution. In contrast, Theorem~\ref{thm_unique} shows that in the discrete-time setting, these variables necessarily constitute the unique linear equilibrium structure.
This suggests that the continuous-time equilibrium in \citet{Choi} may be the unique limit of the discrete-time linear equilibria, at least within the same class.
Figure~\ref{fig1} provides numerical evidence supporting this connection. In this sense, our uniqueness result provides a theoretical justification for the specific equilibrium structure employed in \citet{Choi}, reinforcing that the qualitative insights drawn from that equilibrium, such as the trader's behavior or price impact, are not artifacts of a particular specification but hold robustly across the entire linear class.
\begin{figure}[htbp]
	\centering
	\begin{minipage}[b]{0.45\textwidth}
		\centering
		\includegraphics[width=\linewidth]{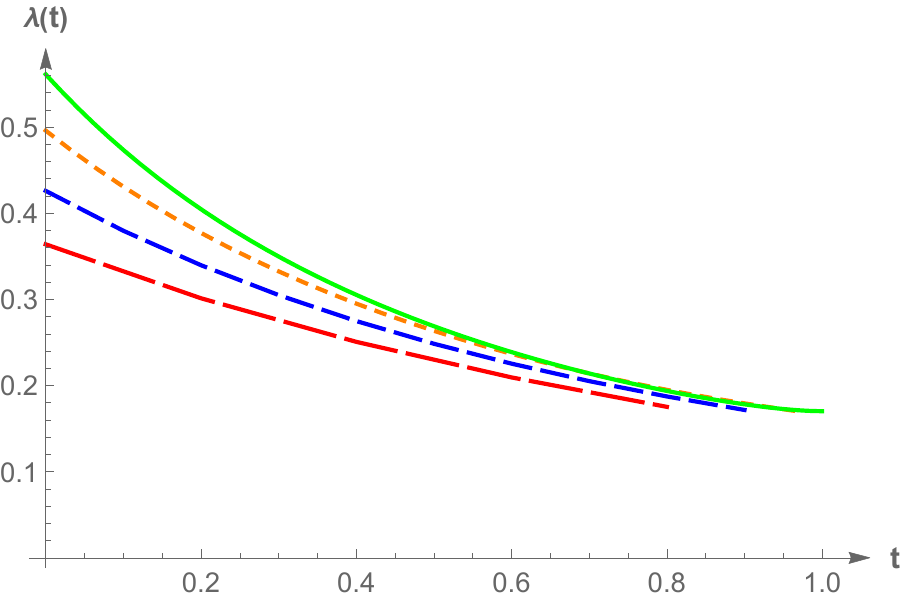}
	\end{minipage}
	\hspace{0.03\textwidth}
	\begin{minipage}[b]{0.45\textwidth}
		\centering
		\includegraphics[width=\linewidth]{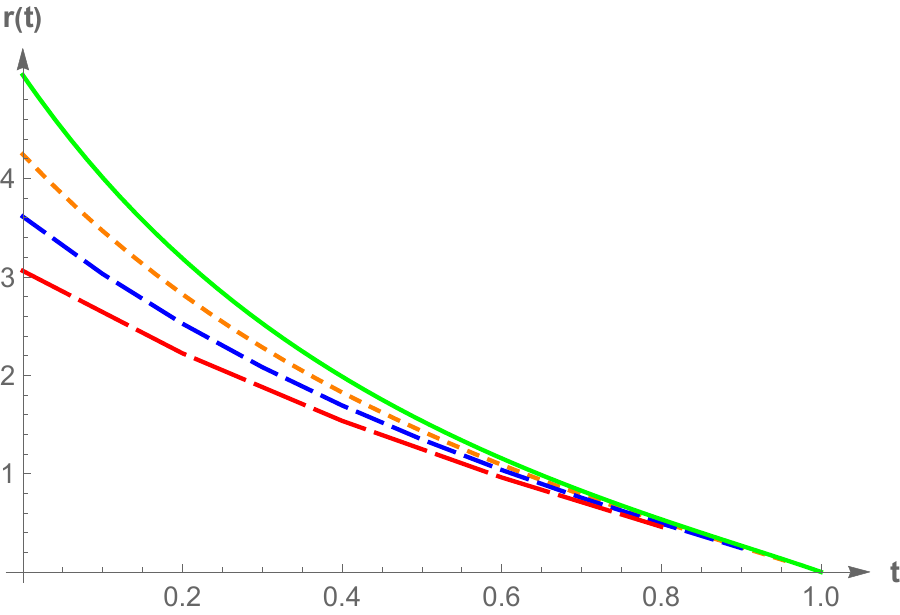}
	\end{minipage}
	\captionsetup{width=\linewidth}
	\caption{
	Comparison of continuous-time equilibrium quantities (left: $\lambda(t)$, right: $r(t)$ shown in green) with the corresponding discrete-time values $\{\lambda_n\}_n$ and $\{r_n\}_n$.
Exogenous parameters are $\sigma_w=1/\sqrt{N}$, $\sigma_v=1$, $\sigma_a=3$, $\rho=\tfrac{1}{3}$ and $N=5$ (red), $10$ (blue), $30$ (orange).	
	}
	\label{fig1}
\end{figure}

\subsection{Existence of Linear Equilibrium}

Having established the structural uniqueness, we now turn to the existence of a linear equilibrium.
We begin by deriving the equations implied by the equilibrium conditions \eqref{P_n} and \eqref{q_n}.
Define the innovation process $z_n:=y_n - \bE[y_n|F_{n-1}^M]$. From \eqref{theta_n}, we obtain:
\begin{align}
z_n &= y_n - \bE\left[ y_n \middle| \mathcal{F}_{n-1}^M \right] 
= \Delta \theta_n + \Delta w_n - \mathbb{E}\left[ \Delta \theta_n + \Delta w_n \middle| \mathcal{F}_{n-1}^M \right] \nonumber\\
&= \beta_n (\ta-\theta_{n-1}-q_{n-1}) +\alpha_n q_{n-1} + \Delta w_n - \bE\left[\beta_n (\ta-\theta_{n-1}-q_{n-1}) +\alpha_n q_{n-1}+ \Delta w_n \middle| \mathcal{F}_{n-1}^M \right] \nonumber\\
&= \beta_n \left( \tilde{a}-\theta_{n-1} -q_{n-1}\right) + \Delta w_n, \label{z}
\end{align}
where the final equality follows from $q_{n-1}=\bE[\ta-\theta_{n-1} | \FF_{n-1}^M]$. 
Due to the linear structure, all involved random variables are jointly Gaussian. It follows that $z_1,\dots,z_N$ are mutually independent, and both $\tv-p_n$ and $\ta-\theta_n-q_n$ are independent of $\sigma(z_1,\dots,z_n)$. We define 
\begin{align}
\Sigma_n^{(1)}:=\mathbb{E}\big[\left(\ta-\theta_n-q_n\right)^2\big], \quad
\Sigma_n^{(2)}:=\mathbb{E}\left[\left(\ta-\theta_n-q_n\right)\left(\tv-p_n\right)\right]. \label{Sigma_def}
\end{align}
Then the filtering identities $p_n = \mathbb{E}[\tilde{v} | \mathcal{F}_n^M]$ and $q_n = \mathbb{E}[\ta - \theta_n | \mathcal{F}_n^M]$ imply 
\begin{align}
\Delta p_n&=\bE[\tv - p_{n-1}|\FF_{n}^M]=\bE[\tv - p_{n-1} |\sigma(z_1,\dots,z_n)]=\bE[\tv - p_{n-1} |\sigma(z_n)]   \nonumber\\
& = \frac{\bE[(\tv-p_{n-1}) z_n]}{\bE[z_n^2]} z_n
= \frac{\beta_n \Sigma_{n-1}^{(2)}}{\beta_n^2 \Sigma_{n-1}^{(1)} + \sigma_w^2}z_n, \label{p_filter}\\
\Delta q_n &= \bE[ \ta-\theta_n - q_{n-1}| \FF_n^M ]=\bE[ \ta-\theta_{n-1} - q_{n-1}- \beta_n (\ta - \theta_{n-1}-q_{n-1}) - \alpha_n q_{n-1} | \FF_n^M ]\nonumber\\
&= (1-\beta_n)\bE[ \ta-\theta_{n-1} - q_{n-1}| \sigma(z_n) ] - \alpha_n q_{n-1} \nonumber\\
&= (1-\beta_n)\frac{\bE[(\ta-\theta_{n-1}-q_{n-1})z_n]}{\bE[z_n^2]} z_n - \alpha_n q_{n-1}
=\frac{(1-\beta_n)\beta_n \Sigma_{n-1}^{(1)}}{\beta_n^2 \Sigma_{n-1}^{(1)}+\sigma_w^2} z_n - \alpha_n q_{n-1}. \label{q_filter}
\end{align}
Comparing \eqref{P}-\eqref{Q} with \eqref{p_filter}-\eqref{q_filter}, we identify
\begin{align}
\lambda_n = \frac{\beta_n \Sigma_{n-1}^{(2)}}{\beta_n^2 \Sigma_{n-1}^{(1)} + \sigma_w^2}, \quad 
r_n =  \frac{(1-\beta_n)\beta_n \Sigma_{n-1}^{(1)}}{\beta_n^2 \Sigma_{n-1}^{(1)}+\sigma_w^2}. \label{lambda-r}
\end{align}
Substituting \eqref{P}, \eqref{Q}, \eqref{theta_n}, and \eqref{lambda-r} into \eqref{Sigma_def}, we derive the recursions:
\begin{align}
\Sigma_n^{(1)}=\frac{(1-\beta_n)^2 \sigma_w^2 \Sigma_{n-1}^{(1)}}{\beta_n^2 \Sigma_{n-1}^{(1)}+\sigma_w^2}, \quad 
\Sigma_n^{(2)}=\frac{(1-\beta_n) \sigma_w^2 \Sigma_{n-1}^{(2)}}{\beta_n^2 \Sigma_{n-1}^{(1)}+\sigma_w^2}. \label{Sigma_eq}
\end{align}
For notational convenience, define
\begin{align}
\xi_n:= \frac{\beta_n}{1-\beta_n} \quad \textrm{for}\quad 1\leq n \leq N-1. \label{xi_def}
\end{align}
Then from \eqref{lambda-r}, \eqref{Sigma_eq}, and \eqref{xi_def}, we obtain for $1\leq n \leq N-1$,
\begin{align}
\lambda_n =  \frac{\Sigma_n^{(2)}}{\sigma_w^2} \xi_n,  \quad
r_n= \frac{\Sigma_n^{(1)}}{\sigma_w^2} \xi_n,  \quad
\Sigma_{n-1}^{(1)}= \frac{\sigma_w^2 \Sigma_n^{(1)}(1+\xi_n)^2}{\sigma_w^2 -\Sigma_n^{(1)}\xi_n^2}, \quad
\Sigma_{n-1}^{(2)}= \frac{\sigma_w^2 \Sigma_n^{(2)}(1+\xi_n)}{\sigma_w^2 -\Sigma_n^{(1)}\xi_n^2}. \label{lr}
\end{align}

We next turn to the value function and the optimality condition. As shown in the proof of Theorem~\ref{thm_unique} in the Appendix, the constrained trader's value function has the  quadratic form in \eqref{value_form}, and the optimality condition yields \eqref{beta}, which can be rewritten in terms of $\xi_n$ as
\begin{align}
\xi_n = \frac{2(1+r_n)I_n - r_n J_n - \lambda_n}{2r_n (1+r_n)I_n - r_n(1+2r_n) J_n + \lambda_n}, \label{xi_eq}
\end{align}
subject to the second-order condition:  
\begin{align}
(1+r_n)((1+r_n)I_n - r_n J_n)>0. \label{SOC}
\end{align}
 
Substituting \eqref{lr} and \eqref{xi_def} into  \eqref{I} and \eqref{J}, the recursion for $I_n$ and $J_n$ becomes
\begin{align}
\begin{split}\label{IJ}
I_{n-1}&=\frac{(\sigma_w^2 - \Sigma_n^{(1)} \xi_n^2)(\sigma_w^2 + 2\sigma_w^2 \xi_n + \Sigma_n^{(1)} \xi_n^2)}{\sigma_w^4 (1+\xi_n)^2}I_n + \frac{\Sigma_n^{(1)} (\sigma_w^2 + \Sigma_n^{(1)} \xi_n) \xi_n^3}{\sigma_w^4 (1+\xi_n)^2}J_n, \quad
J_{n-1}=\frac{\lambda_n}{1+r_n}.
\end{split} 
\end{align}

The terminal condition $\theta_N=\ta$ implies
\begin{align}
&\beta_N=\alpha_N=1, \quad r_N=0, \quad s_N=-1, \nonumber\\
&I_{N-1}=J_{N-1}=\lambda_N=\frac{\Sigma_{N-1}^{(2)}}{\Sigma_{N-1}^{(1)}+\sigma_w^2}, \label{terminal}
\end{align}
where \eqref{terminal} follows from \eqref{IJ_N} and \eqref{lambda-r}. 
The definition \eqref{Sigma_def} gives the initial conditions:
\begin{align}
\Sigma_0^{(1)}=\bE[\ta^2]=\sigma_a^2, \quad \Sigma_0^{(2)}=\bE[\ta \tv] = \rho \, \sigma_a \sigma_v. \label{Sigma_0}
\end{align}

To summarize, proving the existence of a linear equilibrium reduces to verify that there exist sequences of constants $\{\xi_n,r_n,I_n,J_n\}_{n=1}^{N-1}$, $\{\lambda_n\}_{n=1}^N$, $\{\Sigma_n^{(1)},\Sigma_n^{(2)}\}_{n=0}^{N-1}$ such that:

\medskip

i) For $1\leq n \leq N-1$, the equations and inequality \eqref{lr}, \eqref{xi_eq}, \eqref{SOC}, and  \eqref{IJ} hold. 

\smallskip

ii) The terminal \eqref{terminal} and \eqref{Sigma_0} are satisfied. 

\medskip

Given terminal values $(\Sigma_{N-1}^{(1)},\Sigma_{N-1}^{(2)})$, the recursive equations determine all coefficients backward in time, leading to initial values $(\Sigma_0^{(1)},\Sigma_0^{(2)})$.
The problem therefore reduces to finding $(\Sigma_{N-1}^{(1)},\Sigma_{N-1}^{(2)})$ such that \eqref{Sigma_0} is satisfied.
This is a two dimensional boundary value problem, which would normally be difficult to solve via standard shooting methods.
The key observation is a scaling property that effectively reduces the problem to one dimension, allowing the construction to proceed.

The next theorem establishes that such constants indeed exist, and therefore a linear equilibrium exists. Once these constants are obtained, the remaining equilibrium parameters $\{\beta_n, \alpha_n\}$ can be computed using \eqref{xi_def} and \eqref{theta_n2}.

\medskip

\begin{theorem}\label{thm_existence}
There exists a linear equilibrium.
\end{theorem}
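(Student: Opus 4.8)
The plan is to exploit the homogeneity (the \emph{scaling} property) of the backward recursion to collapse the two-dimensional shooting problem into a one-dimensional one, and then to close the scalar problem by a continuity and intermediate-value argument.

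\textbf{Step 1 (the scaling property).} I would first make precise the homogeneity built into the system. Once a terminal pair $(\Sigma_{N-1}^{(1)},\Sigma_{N-1}^{(2)})$ is fixed and $r_n=\Sigma_n^{(1)}\xi_n/\sigma_w^2$, $\lambda_n=\Sigma_n^{(2)}\xi_n/\sigma_w^2$ are substituted from \eqref{lr}, the relation \eqref{xi_eq} becomes a polynomial equation in $\xi_n$ whose numerator and denominator are each homogeneous of degree one in $(\lambda_n,I_n,J_n)$, so an overall factor $c$ cancels and the admissible root $\xi_n$ is unchanged under the rescaling $(\Sigma_n^{(2)},\lambda_n,I_n,J_n)\mapsto c\,(\Sigma_n^{(2)},\lambda_n,I_n,J_n)$. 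Since $r_n$ and the $\Sigma^{(1)}$-recursion in \eqref{lr} involve only $\Sigma^{(1)}$ and $\xi$, they are likewise unaffected, while $\Sigma^{(2)}$, $I$, $J$ in \eqref{lr} and \eqref{IJ} all scale linearly with $c$. Because the terminal data $I_{N-1}=J_{N-1}=\lambda_N=\Sigma_{N-1}^{(2)}/(\Sigma_{N-1}^{(1)}+\sigma_w^2)$ in \eqref{terminal} is itself proportional to $\Sigma_{N-1}^{(2)}$, a backward induction on $n$ shows that multiplying $\Sigma_{N-1}^{(2)}$ by $c$ multiplies the whole sequence $(\Sigma_n^{(2)},I_n,J_n,\lambda_n)$ by $c$ and leaves $(\xi_n,r_n,\Sigma_n^{(1)})$ invariant.

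\textbf{Step 2 (reduction to one dimension).} I would then normalize $\Sigma_{N-1}^{(2)}=1$ and treat the backward recursion as driven by the single parameter $s:=\Sigma_{N-1}^{(1)}$, writing $F(s):=\Sigma_0^{(1)}$ and $G(s):=\Sigma_0^{(2)}$ for the resulting initial values. By Step 1, for a general terminal value $\Sigma_{N-1}^{(2)}=m$ the initial values are $\Sigma_0^{(1)}=F(s)$, independent of $m$, and $\Sigma_0^{(2)}=m\,G(s)$. Hence matching the initial conditions \eqref{Sigma_0} decouples into the scalar equation $F(s)=\sigma_a^2$ together with the explicit choice $m=\rho\,\sigma_a\sigma_v/G(s)$, which is legitimate as soon as $G(s)\neq 0$.

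\textbf{Step 3 (well-posedness and shooting).} The technical heart is to show that the recursion is well defined and continuous in $s$ on a suitable interval and that $F$ attains $\sigma_a^2$. At each step I would solve \eqref{xi_eq} for an admissible root $\xi_n>0$, checking along the way that the denominator $\sigma_w^2-\Sigma_n^{(1)}\xi_n^2$ in \eqref{lr} stays positive, that the second-order condition \eqref{SOC} holds, and that positivity of $\Sigma_n^{(1)}$ and $\Sigma_n^{(2)}$ propagates. Positivity of $\xi_n$ forces $\Sigma_{n-1}^{(1)}>\Sigma_n^{(1)}$ in \eqref{lr}, giving $F(s)>s$; I expect $F$ to be continuous with $F(s)\to 0$ as $s\to 0^+$ and $F(s)\to\infty$ as $s$ approaches the boundary of the admissible range where $\sigma_w^2-\Sigma_n^{(1)}\xi_n^2\to 0^+$, so that the range of $F$ contains $\sigma_a^2$. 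An intermediate-value argument then produces $s^*$ with $F(s^*)=\sigma_a^2$, and positivity of $\Sigma^{(2)}$ yields $G(s^*)>0$, hence an admissible $m^*=\rho\,\sigma_a\sigma_v/G(s^*)$.

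\textbf{Step 4 (conclusion and the main obstacle).} Running the recursion from $(s^*,m^*)$ produces the full collection $\{\xi_n,r_n,I_n,J_n,\lambda_n,\Sigma_n^{(1)},\Sigma_n^{(2)}\}$ satisfying conditions i) and ii) above, after which $\beta_n=\xi_n/(1+\xi_n)$ and $\alpha_n$ are recovered from \eqref{xi_def} and \eqref{theta_n2}; since \eqref{SOC} holds at every step, the resulting $\{\theta_n,p_n\}$ is a genuine linear equilibrium. I expect the main obstacle to be Step 3: proving that \eqref{xi_eq} always admits an admissible root $\xi_n>0$ that keeps $\sigma_w^2-\Sigma_n^{(1)}\xi_n^2>0$ and \eqref{SOC} intact throughout, and that the induced map $F$ is continuous with range covering $\sigma_a^2$. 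This is exactly where the coupled nonlinearity of the $(\Sigma,I,J)$ dynamics must be controlled, and where the finest estimates are needed.
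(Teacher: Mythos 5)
Your overall strategy coincides with the paper's own proof: the homogeneity you describe in Step 1 is exactly the paper's scaling property \eqref{scaling} (multiplying $\Sigma_{N-1}^{(2)}$ by $c$ scales $(\Sigma_n^{(2)},I_n,J_n,\lambda_n)$ by $c$ and leaves $(\xi_n,r_n,\Sigma_n^{(1)})$ fixed), and your Steps 2--4 reproduce its reduction to a one-dimensional shooting problem closed by continuity and the intermediate value theorem, with your $F(s)>s$ playing the same role as the paper's $\Phi(a,b)>a$. However, there is a genuine gap, and you have flagged it yourself: your Step 3 asserts, but does not prove, that at every stage the optimality equation \eqref{xi_eq} admits a root $\xi_n>0$ keeping $\sigma_w^2-\Sigma_n^{(1)}\xi_n^2>0$, that the second-order condition \eqref{SOC} holds, and that positivity propagates. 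This is precisely the substance of the paper's Steps 1--2, and it is not a routine estimate.

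The paper closes this step by a backward induction on the specific invariant \eqref{ineqs}, whose crucial component $J_n\geq I_n>0$ you never identify. Concretely, \eqref{xi_eq} is rewritten as the polynomial equation $f_n(\xi_n)=0$ of \eqref{fn_def}; the invariant $J_n\geq I_n$ makes $f_n$ a cubic with nonpositive leading coefficient (a quadratic at the terminal step, where $I_{N-1}=J_{N-1}$), and the identity \eqref{J_exp} --- which ties the step-$n$ coefficients to the already-constructed $\xi_{n+1}$ --- gives $f_n(0)<0$ and $f_n\big(\sqrt{\sigma_w^2/\Sigma_n^{(1)}}\big)>0$, producing a unique root in $\big(0,\sqrt{\sigma_w^2/\Sigma_n^{(1)}}\big)$, which is exactly the required bound $\sigma_w^2-\Sigma_n^{(1)}\xi_n^2>0$. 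The second-order condition is then verified through the closed-form expression \eqref{SOC2} (again via \eqref{J_exp}), and the invariant $J_{n-1}\geq I_{n-1}$ is restored for the next step by the identity \eqref{I-J}. Without these identities and the induction hypothesis $J_n\geq I_n$, your Step 3 cannot be closed. Two smaller points: (i) your claim that $F(s)\to\infty$ as $s$ approaches ``the boundary of the admissible range where $\sigma_w^2-\Sigma_n^{(1)}\xi_n^2\to 0^+$'' is misdirected, since the induction shows the recursion is well defined for \emph{every} terminal value $s>0$, so no such boundary exists; the divergence follows simply from $F(s)>s$ as $s\to\infty$, which you already have. (ii) Continuity of $F$ also requires justification; the paper obtains it because $\xi_n$ is a simple root of $f_n$ and hence depends continuously on the coefficients, which again rests on the root structure established in the induction.
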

		
\begin{proof} 
See Appendix.
\end{proof}
	
\medskip
	
The existence result established in Theorem~\ref{thm_existence} fills an important gap in the literature. In the more complex setting of \citet{Choi_JFE}, the interaction between two strategic traders prevented a proof of existence, leaving open the possibility that the reported numerical equilibrium might not be supported by an actual solution. Our result demonstrates that, at least in the simpler but economically meaningful single-agent setting, a linear equilibrium does exist.

This finding provides theoretical support to the empirical and numerical insights developed in the earlier literature. It also highlights how structural simplification can enable formal analysis, and suggests that similar techniques may be useful in analyzing other constrained trading environments.

\section{Conclusion}

This paper provides two main contributions to the study of linear equilibria in markets with constrained traders. First, we establish that the equilibrium structure is uniquely determined within the linear class by two economically meaningful state variables. This result not only clarifies the foundation of the continuous-time equilibrium in \citet{Choi}, but also reinforces the robustness of its qualitative predictions. Second, we prove that a linear equilibrium exists in the discrete-time setting, filling a gap left open in the richer two trader model of \citet{Choi_JFE}. This existence result offers a theoretical basis for prior empirical and numerical findings, and it illustrates how carefully chosen simplifications can make otherwise intractable models analytically solvable. Future research could extend these methods to multi-agent environments or settings with additional market frictions.

\bigskip\bigskip

{\bf Conflict of interest statement}: We declare that we have no known competing financial interests or personal relationships that could have appeared to influence the work reported in this paper.\\

{\bf Data availability statement}: Data sharing not applicable to this article as no datasets were generated or analysed during the current study.\\

{\bf Funding}: This work was supported by the National Research Foundation of Korea (NRF) grant funded by the Korea government (MSIT) (No. RS-2023-00208622).

\appendix

\section{Proofs}\label{appendix_A}
	
\begin{proof}[{\bf Proof of Theorem~\ref{thm_unique}}]
The condition \eqref{P_n} and the linear pricing rule \eqref{general_structure} imply
\begin{align*}
0 &= \mathbb{E}\left[ \Delta p_n \middle| \mathcal{F}_{n-1}^M \right] = \mathbb{E}\left[ \lambda_n y_n + h_{n-1} \middle| \mathcal{F}_{n-1}^M \right] = \lambda_n\, \mathbb{E}\left[\Delta \theta_n \middle| \mathcal{F}_{n-1}^M \right] + h_{n-1},
\end{align*}
so we obtain
\begin{align}\label{h_dynamics}
h_{n-1} = -\lambda_n \, \mathbb{E}[\Delta \theta_n | \mathcal{F}_{n-1}^M].
\end{align}
The terminal constraint $\theta_N=\ta$ implies 
\begin{align}
\Delta \theta_N = \ta - \theta_{N-1}, \quad \textrm{and}\quad q_N= 0.\label{Delta_theta_N}
\end{align}
We show by backward induction that $q_n$ is a linear function of $y_1,\dots,y_n$ for all $n=1,\dots,N$. The case of $n=N$ follows from \eqref{Delta_theta_N}. Suppose that $q_n$ is a linear function of $y_1,\dots,y_n$. Then there exist a constant $r_n$ and a linear function $\tilde h_{n-1}$ of $y_1,\dots,y_{n-1}$ such that
\begin{align}
q_n=r_n y_n + \tilde h_{n-1}. \label{tilde_h}
\end{align} 
We verify that $q_{n-1}$ is a linear function of $y_1,\dots,y_{n-1}$ as follows:
\begin{align}
q_{n-1} &= \bE[\tilde{a}-\theta_{n-1} | \FF_{n-1}^M ] 
=\bE[\tilde{a}-\theta_n | \FF_{n-1}^M ] +\bE[\Delta \theta_n | \FF_{n-1}^M ]
=\bE[q_n | \FF_{n-1}^M ] +\bE[\Delta \theta_n | \FF_{n-1}^M ]\nonumber\\
&=\bE[r_n(\Delta \theta_n + \Delta w_n) + \tilde h_{n-1} | \FF_{n-1}^M ] +\bE[\Delta \theta_n | \FF_{n-1}^M ] = (1+r_n)\bE[\Delta \theta_n | \FF_{n-1}^M ] + \tilde h_{n-1} \nonumber \\
&=-\tfrac{1+r_n}{\lambda_n} h_{n-1} + \tilde{h}_{n-1}, \label{q_n-1}
\end{align}
where the final equality follows from \eqref{h_dynamics}.
Therefore, by induction, $q_n$ is a linear function of $y_1,\dots,y_n$ for all $n=1,\dots,N$. Moreover, from \eqref{tilde_h} and \eqref{q_n-1}, we obtain
\begin{align}
\Delta q_n = r_n y_n + \tfrac{1+r_n}{\lambda_n} h_{n-1}. \label{Delta_q_n}
\end{align}

We now analyze the constrained trader's value function and derive the optimality condition. Specifically, we show by backward induction that the value function in \eqref{insider_object} takes the following form: for $n=0,1,\dots,N-1$, there exist constants $\{I_n,J_n,K_n\}$ such that
\begin{align}
	 \bE\Big[ \sum_{k=n+1}^{N} (\ta - \theta_{k-1}) \Delta p_k \Big| \FF_{n+1}^I \Big] =I_n(\tilde{a}-\theta_{n}-q_{n})^2 + J_n (\tilde{a}-\theta_{n}-q_{n})q_{n} + K_n.  \label{value_form}
\end{align}

The case of $n=N-1$ follows by direct calculation. 
From \eqref{general_structure}, \eqref{h_dynamics}, and \eqref{Delta_theta_N}, we obtain
\begin{align}
\Delta p_N&=\lambda_N y_N - \lambda_N q_{N-1} =\lambda_N (\ta-\theta_{N-1}+\Delta w_n) - \lambda_N q_{N-1}. \label{Delta_p_N}
\end{align}
Then
\begin{align} 
\mathbb{E}[(\tilde{a}-\theta_{N-1}) \Delta p_N | \mathcal{F}_{N}^I]&=\lambda_N (\tilde{a}-\theta_{N-1}) ^2 - \lambda_N (\tilde{a}-\theta_{N-1})q_{N-1} \nonumber\\ 
&=\lambda_N (\tilde{a}-\theta_{N-1}- q_{N-1}) ^2 + \lambda_N (\tilde{a}-\theta_{N-1}-q_{N-1})q_{N-1}. \label{exp_N-1}
\end{align}
Thus, \eqref{value_form} holds for $n=N-1$ with 
\begin{align}
I_{N-1}=J_{N-1}=\lambda_N, \quad K_{N-1}=0. \label{IJ_N}
\end{align}

To proceed by induction, suppose that \eqref{value_form} holds. Then,
\begin{align}
&\bE\Big[ \sum_{k=n}^{N} (\ta - \theta_{k-1}) \Delta p_k \Big| \FF_{n}^I \Big] \nonumber\\
&= \bE\Big[  (\ta - \theta_{n-1}) \Delta p_n + I_n(\tilde{a}-\theta_{n}-q_{n})^2 + J_n (\tilde{a}-\theta_{n}-q_{n})q_{n} + K_n\Big| \FF_{n}^I \Big] \nonumber\\
	 &= \bE\Big[  (\ta - \theta_{n-1}) \Delta p_n + I_n(\tilde{a}-\theta_{n-1}-q_{n-1} - \Delta \theta_n  - \Delta q_n )^2 \nonumber \\
	 &\qquad\qquad + J_n (\tilde{a}-\theta_{n-1}-q_{n-1} - \Delta \theta_n  - \Delta q_n ) (q_{n-1}+ \Delta q_n) + K_n\Big| \FF_{n}^I \Big]. \label{value_n}
\end{align}
Applying \eqref{general_structure} and \eqref{Delta_q_n} to \eqref{value_n}, we obtain an expression that is quadratic in $\Delta \theta_n$. The optimality of $\Delta \theta_n$ in equilibrium implies the following first-order condition:
\begin{align}
\Delta \theta_n &= \beta_n (\ta - \theta_{n-1}-q_{n-1}) + \eta_n q_{n-1} + \gamma_n h_{n-1}\label{theta_n1}\\
\textrm{where}\quad \beta_n &= \tfrac{2(1+r_n)I_n - r_n J_n - \lambda_n}{2(1+r_n)((1+r_n)I_n-r_n J_n)}, \label{beta}\\
\eta_n &= \tfrac{(1+r_n)J_n - \lambda_n}{2(1+r_n)((1+r_n)I_n-r_n J_n)}, \quad \gamma_n = \tfrac{(1+2r_n)J_n - 2(1+r_n)I_n}{2 \lambda_n ((1+r_n)I_n-r_n J_n)}.\label{gamma}
\end{align}
Combining \eqref{h_dynamics}, \eqref{theta_n1}, and the definition $q_{n-1}=\bE[\ta-\theta_{n-1}|\FF_{n-1}^M]$, we obtain
\begin{align}
h_{n-1} = -\lambda_n \, \mathbb{E}[\Delta \theta_n | \mathcal{F}_{n-1}^M]=-\lambda_n( \eta_n q_{n-1} + \gamma_n h_{n-1}) \quad \Longrightarrow \quad h_{n-1} = -\tfrac{\eta_n \lambda_n }{1+\gamma_n \lambda_n} q_{n-1}. \label{h_q}
\end{align}
Substituting \eqref{h_q} and \eqref{gamma} into \eqref{theta_n1}, we obtain that for $1\leq n \leq N-1$,
\begin{align}
\Delta \theta_n &= \beta_n (\ta - \theta_{n-1}-q_{n-1}) +  \alpha_n q_{n-1}, \quad \textrm{where} \quad \alpha_n= 1- \tfrac{\lambda_n}{(1+r_n)J_n}.
\label{theta_n2}
\end{align} 
Plugging \eqref{theta_n2} into \eqref{h_dynamics} yields 
\begin{align*}
h_{n-1}=-\lambda_n \alpha_n q_{n-1}.
\end{align*}
This shows that \eqref{h_dynamics} and \eqref{Delta_q_n} can be rewritten as \eqref{P} and \eqref{Q}.
Finally, substituting \eqref{P}, \eqref{Q}, \eqref{theta_n2}, and \eqref{beta} into \eqref{value_n} leads to the recursive structure:
\begin{align}
	 \bE\Big[ \sum_{k=n}^{N} (\ta - \theta_{k-1}) \Delta p_k \Big| \FF_{n}^I \Big] &= I_{n-1}(\tilde{a}-\theta_{n-1}-q_{n-1})^2 + J_{n-1} (\tilde{a}-\theta_{n-1}-q_{n-1})q_{n-1} + K_{n-1}\nonumber\\
	 \textrm{where} \quad I_{n-1}&= (1-(1+r_n)^2\beta_n^2) I_n +\beta_n^2r_n(1+r_n)J_n,\label{I}\\
		J_{n-1} &= \tfrac{\lambda_n}{1+r_n},\label{J}\\
		K_{n-1} &= K_n + (I_n-J_n) r_n^2 \sigma_w^2.\label{K}
\end{align}
By induction, this completes the verification of \eqref{value_form} for all $n=0,\dots,N-1$, and confirms that the structure given in \eqref{P}, \eqref{Q}, and \eqref{theta_n} holds.
\end{proof}

\bigskip

\begin{proof}[{\bf Proof of Theorem~\ref{thm_existence}}]
We aim to find a solution using the shooting method by adjusting the value of $(\Sigma_{N-1}^{(1)},\Sigma_{N-1}^{(2)})$ to satisfy the boundary condition \eqref{Sigma_0}.

We observe that equations \eqref{lr} and \eqref{xi_eq} yield the following equation for $\xi_n$ for $1\leq n< N$:
\begin{align}
f_n(\xi_n):=2\Big(I_n+ \tfrac{\Sigma_n^{(1)}}{\sigma_w^2}(I_n-J_n) \xi_n \Big) \Big( \tfrac{\Sigma_n^{(1)}}{\sigma_w^2} \xi_n^2 - 1\Big) +\xi_n(1+\xi_n)\Big(\tfrac{\Sigma_n^{(2)}}{\sigma_w^2}-\tfrac{\Sigma_n^{(1)}}{\sigma_w^2}J_n\Big)  = 0. \label{fn_def}
\end{align}
In addition, \eqref{lr} and \eqref{IJ} together imply
\begin{align}
I_{n-1}-J_{n-1}&=\tfrac{1+2\xi_n - r_n(2+r_n)\xi_n^2}{(1+\xi_n)^2} I_n + \tfrac{r_n(1+r_n)\xi_n^2}{(1+\xi_n)^2} J_n - \tfrac{\lambda_n}{1+r_n}\nonumber\\
&=-\tfrac{(\lambda_n - r_n J_n)^2}{4(1+r_n)((1+r_n)I_n- r_n J_n)},\label{I-J}
\end{align}
where the second equality follows from substituting \eqref{xi_eq}.

\medskip

{\bf Step 1.} Let $\Sigma_{N-1}^{(1)}>0$ and $\Sigma_{N-1}^{(2)}>0$ be given. From \eqref{terminal}, the values of $I_{N-1}$, $J_{N-1}$, and $\lambda_N$ are determined, and each is strictly positive. Substituting these into \eqref{fn_def} with $n=N-1$ yields
\begin{align}
f_{N-1}(\xi_{N-1})=\tfrac{\Sigma_{N-1}^{(2)}}{\Sigma_{N-1}^{(1)}+\sigma_w^2} \Big(\big(1+2\tfrac{\Sigma_{N-1}^{(1)}}{\sigma_w^2} \big) \xi_{N-1}^2 + \xi_{N-1} -2 \Big)=0. \label{f_N-1}
\end{align} 
Since 
\begin{align*}
f_{N-1}(0)=-2<0 \quad \textrm{and} \quad f_{N-1}\Big(\sqrt{\tfrac{\sigma_w^2}{\Sigma_{N-1}^{(1)}}}\Big)=\sqrt{\tfrac{\sigma_w^2}{\Sigma_{N-1}^{(1)}}}\Big(1+\sqrt{\tfrac{\sigma_w^2}{\Sigma_{N-1}^{(1)}}}\Big)>0,
\end{align*}
and $f_{N-1}$ is quadratic, there exists a unique root $\xi_{N-1}$ of \eqref{f_N-1} satisfying
\begin{align}
0<\xi_{N-1}<\sqrt{\tfrac{\sigma_w^2}{\Sigma_{N-1}^{(1)}}}.\label{xi_range}
\end{align}

Using \eqref{lr} and \eqref{IJ}, the values of $\lambda_{N-1}, r_{N-1}, \Sigma_{N-2}^{(1)}, \Sigma_{N-2}^{(2)}, I_{N-2}$, and $J_{n-2}$ are determined, and each is strictly positive by \eqref{xi_range}. 
We verify the second-order condition \eqref{SOC} for $n=N-1$:
\begin{align*}
(1+r_{N-1})((1+r_{N-1})I_{N-1}-r_{N-1}J_{N-1})=(1+r_{N-1})I_{N-1}>0,
\end{align*}
where the equality uses $I_{N-1}=J_{N-1}$. The second-order condition and \eqref{I-J} imply $I_{N-2}\leq J_{N-2}$.

In summary, the following inequalities hold for $n=N-2$:
\begin{align}
0<\xi_{n+1}< \sqrt{\tfrac{\sigma_w^2}{\Sigma_{n+1}^{(1)}}}, \quad \lambda_{n+1}>0, \quad r_{n+1}>0, \quad \Sigma_{n}^{(1)}>0,\quad \Sigma_{n}^{(2)}>0, \quad J_{n}\geq I_{n}>0. \label{ineqs}
\end{align}

{\bf Step 2.}
As the induction hypothesis, suppose \eqref{ineqs} holds for a given $n$. We determine $\xi_n, \lambda_n, r_n, \Sigma_{n-1}^{(1)},\Sigma_{n-1}^{(2)},I_{n-1}$, and $J_{n-1}$, and verify that they satisfy \eqref{ineqs} for $n-1$.

From \eqref{Sigma_eq}, \eqref{xi_def}, \eqref{lr}, and \eqref{IJ}, we obtain that
\begin{align}
\tfrac{\Sigma_n^{(2)}}{\sigma_w^2}-\tfrac{\Sigma_n^{(1)}}{\sigma_w^2}J_n =\tfrac{(1+\xi_{n+1})\Sigma_n^{(2)}}{(\Sigma_{n}^{(1)}+\sigma_w^2) \xi_{n+1} + \sigma_w^2}. \label{J_exp}
\end{align}
Using \eqref{ineqs} this yields
\begin{align}
\begin{split}\label{fn_sign}
f_n(0)&=-2I_n<0, \quad
f_n \Big(\sqrt{\tfrac{\sigma_w^2}{\Sigma_{n}^{(1)}}}\Big) = \sqrt{\tfrac{\sigma_w^2}{\Sigma_{n}^{(1)}}}\Big(1+\sqrt{\tfrac{\sigma_w^2}{\Sigma_{n}^{(1)}}}\Big)\tfrac{(1+\xi_{n+1})\Sigma_n^{(2)}}{(\Sigma_{n}^{(1)}+\sigma_w^2) \xi_{n+1} + \sigma_w^2}>0.
\end{split}
\end{align}
Since $f_n$ is a cubic polynomial with nonpositive leading coefficient, \eqref{fn_sign} implies a unique root $\xi_n$ of \eqref{fn_def} satisfying 
\begin{align}
0<\xi_{n}<\sqrt{\tfrac{\sigma_w^2}{\Sigma_{n}^{(1)}}}.\label{xin_range}
\end{align}

Given $\xi_n$, the quantities $\lambda_{n}, r_{n}, \Sigma_{n-1}^{(1)}, \Sigma_{n-1}^{(2)}, I_{n-1}$, and $J_{n-1}$ follow from \eqref{lr} and \eqref{IJ} and are strictly positive by \eqref{xin_range}. The second-order condition \eqref{SOC} reads
\begin{align}
(1+r_n)I_n - r_n J_n &= I_n+ \tfrac{\Sigma_n^{(1)}}{\sigma_w^2}(I_n-J_n) \xi_n 
=\tfrac{\sigma_w^2 \xi_n(1+\xi_n)(1+\xi_{n+1})\Sigma_n^{(2)}}{ 2(\sigma_w^2- \Sigma_n^{(1)} \xi_n^2)((\Sigma_{n}^{(1)}+\sigma_w^2) \xi_{n+1} + \sigma_w^2)}>0, \label{SOC2}
\end{align}
where the first equality follows from \eqref{lr}, the second from \eqref{fn_def} and \eqref{J_exp}, and the inequality from \eqref{ineqs} and \eqref{xin_range}. Together with \eqref{I-J}, this yields $J_{n-1}\geq I_{n-1}$. 

Thus, starting from $(\xi_{n+1}, \lambda_{n+1}, r_{n+1}, \Sigma_n^{(1)}, \Sigma_n^{(2)}, J_n, I_n)$ satisfying \eqref{ineqs}, we obtain \\$(\xi_{n}, \lambda_{n}, r_{n}, \Sigma_{n-1}^{(1)}, \Sigma_{n-1}^{(2)}, J_{n-1}, I_{n-1})$ that also satisfy \eqref{ineqs} with $n$ replaced by $n-1$. 

\smallskip

By induction, for any given $\Sigma_{N-1}^{(1)}>0$ and $\Sigma_{N-1}^{(2)}>0$, this procedure recursively constructs $\{\xi_n,r_n,I_n,J_n\}_{n=1}^{N-1}$, $\{\lambda_n\}_{n=1}^N$, and $\{\Sigma_n^{(1)},\Sigma_n^{(2)}\}_{n=0}^{N-1}$ satisfying \eqref{lr}, \eqref{xi_eq}, \eqref{SOC}, and \eqref{IJ} for $1\leq n \leq N-1$, along with \eqref{terminal}.

\medskip

{\bf Step 3.} It remains to show that there exist $\Sigma_{N-1}^{(1)}>0$ and $\Sigma_{N-1}^{(2)}>0$ such that the condition \eqref{Sigma_0} is also satisfied. From the previous step, $\Sigma_0^{(1)}$ and $\Sigma_0^{(2)}$ are determined recursively by the terminal values $\Sigma_{N-1}^{(1)}$ and $\Sigma_{N-1}^{(2)}$. We denote this dependency by functions $\Phi,\Psi: (0,\infty)^2 \to (0,\infty)$: 
\begin{align}
\Sigma_0^{(1)}= \Phi(\Sigma_{N-1}^{(1)},\Sigma_{N-1}^{(2)}),\quad \Sigma_0^{(2)}= \Psi(\Sigma_{N-1}^{(1)},\Sigma_{N-1}^{(2)}). \label{PhiPsi}
\end{align}
To complete the proof, we must show that there exist $\ha>0$ and $\hb>0$ such that 
\begin{align}
\Phi(\ha,\hb)=\sigma_a^2, \quad \Psi(\ha,\hb)=\rho \, \sigma_a \sigma_v. \label{match}
\end{align}

In Step 2, the solution $\xi_n$ to the cubic polynomial $f_n$ is a simple real root, hence it depends continuously on the coefficients of $f_n$.\footnote{It is a well-known result that the roots of a polynomial depend continuously on its coefficients. See, for example, Theorem 3.5 in the note by Alen Alexanderian, available at \url{https://aalexan3.math.ncsu.edu/articles/polyroots.pdf}.}
From \eqref{lr} and \eqref{IJ}, it follows that $ \lambda_n, r_n, \Sigma_{n-1}^{(1)},\Sigma_{n-1}^{(2)},I_{n-1}$, and $J_{n-1}$ also depend continuously on the previous values $\Sigma_n^{(1)}, \Sigma_n^{(2)},J_n$, and $I_n$. Therefore, $\Phi$ and $\Psi$ are continuous.

Let $c>0$ be a constant. If the sequence of the coefficients $\{\xi_n, \lambda_n, r_n, \Sigma_{n-1}^{(1)}, \Sigma_{n-1}^{(2)}, I_n, J_n\}_n$ satisfies \eqref{lr}, \eqref{xi_eq}, \eqref{IJ}, and \eqref{terminal}, then $\{\xi_n, c\lambda_n, r_n, \Sigma_{n-1}^{(1)}, c\Sigma_{n-1}^{(2)},c I_n, c J_n\}_n$ also satisfies the same equations. Therefore,
\begin{align}
\Phi(a,c\, b) = \Phi(a,b), \quad \Psi(a,c\, b) = c \, \Psi(a,b) \quad \textrm{for all $a,b,c>0$.} \label{scaling}
\end{align}

Since $\xi_n>0$, \eqref{lr} implies that $\Sigma_{n-1}^{(1)}>\Sigma_n^{(1)}$, hence $\Phi(a,b)>a$. Consequently,
\begin{align}
\lim_{a\to \infty} \Phi(a,b)=\infty. \label{Phi_infty}
\end{align}
Also, \eqref{lr}, \eqref{IJ}, and \eqref{fn_def} imply that 
\begin{align}
\lim_{a \downarrow 0} \Phi(a,b)=0. \label{Phi_0}
\end{align}
By continuity of $\Phi$, \eqref{Phi_infty} and \eqref{Phi_0} ensure the existence of $\ha>0$ such that $\Phi(\ha,1)=\sigma_a^2$. Setting $\hb = \tfrac{\rho\, \sigma_a \sigma_v}{\Psi(\ha,1)}$, then \eqref{scaling} yields \eqref{match}.
\end{proof}


\begin{thebibliography}{}


\bibitem[Back(1992)]{Back} Back, K., 1992. Insider trading in continuous time. Rev. Finan. Stud. 5 (3), 387–409.

\bibitem[Back and Baruch(2004)]{BackBaruch} Back, K., Baruch, S., 2004. Information in Securities Markets: Kyle Meets Glosten and Milgrom. Econometrica 72 (2), 433–465.

\bibitem[Barardehi and Bernhardt(2025)]{Barardehi} Barardehi, Y., Bernhardt, D., 2025. Revisiting the U-shaped patterns in volatility and price impacts: Novel results using trade-time estimates. J. Financ. Mark. 74, 100971.

\bibitem[Caldentey and Stacchetti(2010)]{Caldentey} Caldentey, R., Stacchetti, E., 2010. Insider trading with a random deadline. Econometrica 78 (1), 245–283.

\bibitem[Cetin(2018)]{Cetin} Çetin, U., 2018. Financial equilibrium with asymmetric information and random horizon. Finance Stoch. 22, 97–126.

\bibitem[Cetin Danilova(2016)]{CetinDanilova} Çetin, U., Danilova, A., 2016. Markovian Nash equilibrium in financial markets with asymmetric information and related forward–backward systems. Ann. Appl. Probab. 26 (4), 1996–2029.


\bibitem[Choi et al.(2023)]{Choi} Choi, J.H., Kwon, H., Larsen, K., 2023. Trading constraints in continuous-time Kyle models. SIAM J. Control Optim. 61 (3), 1494–1512.

\bibitem[Choi et al.(2019)]{Choi_JFE} Choi, J.H., Larsen, K., Seppi, D., 2019. Information and trading targets in a dynamic market equilibrium. J. Financ. Econ. 132 (3), 22–49.

\bibitem[Collin-Dufresne and Fos(2016)]{Collin} Collin-Dufresne, P., Fos, V., 2016. Insider Trading, Stochastic Liquidity, and Equilibrium Prices. Econometrica 84 (4), 1441–1475.

\bibitem[Hasbrouck(1991)]{Hasbrouck} Hasbrouck, J., 1991. Measuring the Information Content of Stock Trades. J. Financ. 46 (1), 179–207.

\bibitem[Jain and Joh(1988)]{Jain} Jain, P.C., Joh, G.H., 1988. The dependence between hourly prices and trading volume. J. Financ. Quant. Anal. 23 (3), 269–283.

\bibitem[Kyle(1985)]{Kyle} Kyle, A.S., 1985. Continuous auctions and insider trading. Econometrica 53 (6), 1315–1335.
\end{thebibliography}
\end{document}